\newtheorem{definition}{Definition}
\newtheorem{lemma}{Lemma}
\newtheorem{remark}{Remark}[section]
\newtheorem{theorem}{Theorem}
\newcommand{\cond}{\,\vert\,}
\newcommand{\defeq}{\triangleq}
\newcommand{\T}{{\scriptscriptstyle\mathsf{T}}}
\newfont{\bbb}{msbm10 scaled 500}
\newfont{\bb}{msbm10 scaled 1100}
\newcommand{\CC}{\mbox{\bb C}}
\newcommand{\gv}{{\bf g}}
\newcommand{\hv}{{\bf h}}
\newcommand{\rv}{{\bf r}}
\newcommand{\uv}{{\bf u}}
\newcommand{\vv}{{\bf v}}
\newcommand{\xv}{{\bf x}}
\newcommand{\yv}{{\bf y}}
\newcommand{\zerov}{{\bf 0}}
\newcommand{\Am}{{\bf A}}
\newcommand{\Cc}{{\cal C}}
\newcommand{\Nc}{{\cal N}}
\newcommand{\trace}{{\hbox{tr}}}
\DeclareFontFamily{U}{cmfi}{}
\DeclareFontShape{U}{cmfi}{m}{n}{ <-> cmfi10 }{}
\DeclareSymbolFont{CMFI}{U}{cmfi}{m}{n}
\newcommand{\vh}{\pmb{h}}
\newcommand{\vg}{\pmb{g}}
\newcommand{\vx}{\pmb{x}}
\newcommand{\vu}{\pmb{u}}
\renewcommand{\vv}{\pmb{v}}
\newcommand{\rvh}{H}
\newcommand{\rvg}{G}
\newcommand{\rvv}{V}
\newcommand{\rz}{Z}
\newcommand{\ry}{Y}
\renewcommand{\rv}{V}
\newcommand{\taulog}{\mathcal{O}_P}
\begin{document}

\title{On the Secrecy Degrees of Freedom of Multi-\\Antenna Wiretap Channels with Delayed CSIT}
\thanks{This work was partially supported by the framework of the FP7 Network of Excellence in Wireless Communications NEWCOM++.}
\author{\authorblockN{Sheng Yang, Pablo Piantanida, Mari Kobayashi}
\authorblockA{
SUPELEC \\
Gif-sur-Yvette, France\\
 {\tt \{sheng.yang,pablo.piantanida,mari.kobayashi\}@supelec.fr}
}
\and
\authorblockN{Shlomo Shamai (Shitz)}
\authorblockA{ Technion-Israel Institute of Technology\\
Haifa, Israel\\  
{\tt sshlomo@ee.technion.ac.il}
}
}
\maketitle
\vspace{-25mm}

\begin{abstract}
The secrecy degrees of freedom (SDoF) of the Gaussian multiple-input and single-output (MISO)
wiretap channel is studied under the assumption that
delayed channel state information (CSI) is available at the
transmitter and each receiver knows its own instantaneous channel.
We first show that
a strictly positive SDoF can be guaranteed whenever the transmitter has delayed CSI (either on the legitimate channel or/and the eavesdropper channel). In particular, in the case with delayed CSI on
both channels, it is shown that the optimal SDoF is
$2/3$. We then generalize the result to the two-user Gaussian MISO broadcast channel with confidential messages
and characterize the SDoF region when the transmitter has delayed CSI of both receivers. Interestingly, the artificial noise schemes exploiting several time instances are shown to provide the optimal SDoF region by masking the confidential message to the unintended receiver while aligning the interference at each receiver. 
\end{abstract}

\section{Introduction}
Although perfect channel state information at transmitter (CSIT) may not be available in most practical scenarios due to time-varying nature
of wireless channels, many wireless applications must still guarantee
secure and reliable communication. In fast fading scenarios, the channel
estimation/feedback process is often slower than the coherence time and
CSIT may be further outdated. In \cite{maddah2010degrees}, the authors
considered such a scenario in the context of multi-input single-output
(MISO) broadcast channels~(BCs). By assuming delayed CSIT from each
receiver and perfect CSI at the receivers, they established the optimal
sum-degrees of freedom~(DoF). These results show that, by a careful
design of linear precoding schemes, completely outdated CSIT, i.e.
independent of the current channel state, can still significantly
increase the DoF. Recently, \cite{vaze2010degrees} extended the work for
two-user multi-input multi-output (MIMO) BCs and characterized the DoF
region.  The same feedback model has also been studied in
\cite{maleki2010retrospective} where the so-called retrospective
interference alignment has been proposed for networks with distributed
encoders~(e.g. interference channels and X-channels). Finally,
\cite{vaze2011degrees} established the DoF region of the two-user MIMO
interference channel.

The secrecy capacity of MISO Gaussian wiretap channel is not  fully
understood yet for the cases of partial (or imperfect) CSIT.  Due to the
difficulty of its complete
characterization, a number of contributions have focused on secrecy degrees of freedom (SDoF) capturing the behavior in high signal-to-noise (SNR) regime~(see e.g.
\cite{yingbin2009compound,khisti2010compound,kobayashi2009compound,kobayashi2010secrecy}). References \cite{yingbin2009compound,khisti2010compound,
kobayashi2009compound} investigated compound models where the channel uncertainty is modeled as a set of finite states, while \cite{kobayashi2010secrecy} considered the case when the transmitter knows some special structure of the block-fading channels of receivers. 

In this paper, inspired by recent exciting results, we
study the impact of delayed CSIT on the MISO Gaussian wiretap and the MISO Gaussian BC
with confidential messages (BCC). We assume that delayed CSI is available both at the transmitter and at the receivers (or
eavesdroppers), where each receiver knows its own instantaneous channel.  
We consider two different cases for the wiretap channel: (i) the ``asymmetric scenario'' where the transmitter has delayed CSI of either the
legitimate channel or the eavesdropper channel, and (ii) the ``symmetric scenario'' where the transmitter has delayed CSI of both channels. 
It is shown that, similarly to the conclusion drawn in
\cite{maddah2010degrees,maleki2010retrospective}, delayed CSIT can
increase the SDoF. More precisely, by means of simple artificial noise
schemes, a SDoF of $1/2$ can be guaranteed in the asymmetric scenario
while a SDoF of $2/3$ is ensured in the symmetric case. It turns out
that $2/3$ is the fundamental SDoF for symmetric scenario. Then, we consider the MISO BCC where the transmitter wishes to send two messages respectively to two receivers while keeping each of them secret to the unintended receiver. We characterize the SDoF region and show that the artificial noise to convey two messages enables to achieve the sum rate SDoF point $(\frac{1}{2},\frac{1}{2})$. 

The rest of the paper is organized as follows. 
Section II presents the system model while Section III provides an upper bound and artificial
noise schemes for the wiretap channel. The SDoF
region of MISO-BCC are derived in Section IV. Finally Section V concludes the paper. 
We should emphasize that all the results of this work apply for 
$M\geq 2$, although the achievability results are provided for $M=2$ for the
sake of simplicity. 

{\bf Notations:} Upper case letters, lower case bold letters are used to
denote random variables, vectors, respectively. $X^n$ denotes the
sequence $(X_1,\ldots,X_n)$.  $\Am^\T$ and $\trace ( \Am)$
denote the transpose and the trace of matrix $\Am$, respectively.
$h(X)$ denotes the differential entropy of random variable $X$.
$\taulog$ denotes any real-valued function $f(P)$ such that
${\displaystyle \lim_{P\to \infty}} \frac{f(P)}{\log_2 P}=0$. 


\section{System Model}
Consider the fading Gaussian MISO wiretap channel, where the transmitter with $M$ antennas sends a confidential message to the legitimate receiver in the presence of an eavesdropper. The corresponding channel models are given by
\begin{align*}
    y_{t} = \vh_{t}^{\T} \vx_{t} + e_{t}, \\
    z_{t} = \vg_{t}^{\T}  \vx_{t} + b_{t},
\end{align*}
for $t=1,\ldots,n$, where $(y_{t}, z_{t})$ denotes the observations at the legitimate receiver and the eavesdropper at channel use $t$, associated to $M$-input single-output channel vector $\vh_{t},\vg_{t}\in\CC^{M\times 1}$, respectively, and $(e_{t},b_{t})$ are assumed to be independent and identically distributed (i.i.d.) additive white Gaussian noises $\sim\Nc_{\Cc}(0,1)$, the input vector $\vx_{t}\in\CC^{M\times 1}$ is subject to the power constraint 
$
\frac{1}{n}\sum\limits_{t=1}^n \trace ( \vx_{t}  \vx_{t}^H  ) \leq P.
$  
We assume any arbitrary stationary fading process where $\{\vh_{t}, \vg_{t}\}_{t=1}^\infty$ are mutually independent and change from a letter to another one in an independent manner.\vspace{1mm}

\begin{definition} A code for the Gaussian MISO wiretap channel with delayed CSI consists of:
\begin{itemize}
\item A sequence of stochastic encoders\footnote{If delayed CSI from  one terminal is available, the encoder depends only on $\{\vg_{1},\dots,\vg_{t-1}\}$ or $\{\vh_{1},\dots,\vh_{t-1}\}$.} $F_t:\{1,\dots,M_n\}\times \{\vh_{1},\dots,\vh_{t-1}\} \times \{\vg_{1},\dots,\vg_{t-1}\}  \longmapsto \CC^{M}$ where the message $W$ is uniformly distributed over $\{1,\dots,M_n\}$, 
\item A legitimate decoder  given by the mapping $\hat{W}:\{y_{1},\dots,y_{n}\}\times \{\vh_{1},\dots,\vh_{n}\} \longmapsto \{1,\dots,M_n\}$,
\item The error probability is then defined by 
$$
P_e^{(n)}=\Pr\left\{ \mathbf{W}\neq \mathbf{\hat{W}}\right\}. 
$$
\end{itemize}\vspace{1mm}

An SDoF $d\geq 0$ is said {\em achievable} if there exists a code that simultaneously satisfies
\begin{eqnarray*}
\lim_{P\rightarrow \infty} \liminf_{n\rightarrow \infty} \frac{n^{-1} \log_2 M_n(P)}{\log_2 P}\geq d,
\end{eqnarray*}
with 
\begin{eqnarray*}
\lim\limits_{n\rightarrow \infty} P_{e}^{(n)} = 0,
\end{eqnarray*}
and the equivocation 
\begin{eqnarray*}\label{eq:SecurityConstraints}
\lim_{P\rightarrow \infty} \limsup_{n\rightarrow \infty} \frac{n^{-1} I(W;Z^n,H^n,G^n)}{\log_2 P}=0. 
 \end{eqnarray*}
The supremum of all achievable SDoF is then called the fundamental SDoF of the wiretap channel. 
\end{definition}

\section{MISO Wiretap Channel with Delayed CSIT}

The SDoF of the Gaussian MISO wiretap channel is upper-bounded by $1$, which is the DoF of a MISO channel. It is achievable when instantaneous CSI on either the legitimate or the eavesdropper channel is available at the transmitter. In this section, we first provide a new upper bound on the SDoF when no instantaneous CSI is available at the transmitter. It will then be shown that this upper-bound is achievable for the symmetric scenario where delayed CSIT from both the legitimate and eavesdropper channel is available. 

\subsection{Upper Bound on the SDoF}

\begin{theorem}[upper bound] \label{thm:ub}
 Without instantaneous CSIT, the SDoF of the Gaussian MISO wiretap channel is upper-bounded by $d=\frac{2}{3}$. 
\end{theorem}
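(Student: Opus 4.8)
The plan is to combine three ingredients: a Fano/secrecy reduction expressing $nR_s$ through conditional mutual informations, the single-receive-antenna power bound $h(Y^n\mid H^n,G^n)\le n\log_2 P + n\,\taulog$, and---crucially---a statistical-equivalence (Maddah-Ali--Tse type) entropy inequality that exploits the absence of current CSIT. Throughout I write $S^n=(H^n,G^n)$ for the full CSI, and since $W$ is independent of the channel realizations the security constraint reads $I(W;Z^n\mid S^n)=n\,\taulog$. For an upper bound I may endow the transmitter with the strongest admissible side information, namely delayed CSI of \emph{both} channels, so that $\vx_t$ is a (possibly randomized) function of $(W,\vh^{t-1},\vg^{t-1})$ only; the resulting bound then holds a fortiori for any weaker CSIT assumption.

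First I would record two decoding bounds. Fano directly gives $nR_s\le I(W;Y^n\mid S^n)+n\epsilon_n = h(Y^n\mid S^n)-h(Y^n\mid W,S^n)+n\epsilon_n$; call this bound (A). Feeding the eavesdropper's signal to the legitimate decoder as a genie and subtracting the vanishing security term yields $nR_s\le I(W;Y^n\mid Z^n,S^n)+n\,\taulog\le h(Y^n,Z^n\mid S^n)-h(Z^n\mid S^n)+n\,\taulog$; call this bound (B).

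The heart of the argument is the statistical-equivalence lemma: because the input at time $t$ is independent of the current pair $(\vh_t,\vg_t)$, which are identically distributed across the two channels, the two scalar outputs are exchangeable conditionally on the past, and summing the per-symbol inequalities gives, for any conditioning $\mathcal U$ with $\vx_t$ a function of $(\mathcal U,S^{t-1})$ and extra randomness,
\[
h(Y^n,Z^n\mid \mathcal U,S^n)\le 2\,h(Z^n\mid \mathcal U,S^n)+n\,\taulog ,
\]
and symmetrically with $Y^n$ and $Z^n$ interchanged. Applying it with $\mathcal U=W$ gives $h(Z^n\mid W,S^n)\le 2\,h(Y^n\mid W,S^n)+n\,\taulog$, hence $h(Y^n\mid W,S^n)\ge\tfrac12 h(Z^n\mid W,S^n)$; combined with $h(Z^n\mid W,S^n)\ge h(Z^n\mid S^n)-n\,\taulog$ (the security constraint) this lower-bounds the subtracted term of (A), namely $h(Y^n\mid W,S^n)\ge\tfrac12 h(Z^n\mid S^n)-n\,\taulog$. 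Applying the same lemma with $\mathcal U=\emptyset$ gives $h(Y^n,Z^n\mid S^n)\le 2\,h(Z^n\mid S^n)+n\,\taulog$, so that (B) collapses to $nR_s\le h(Z^n\mid S^n)+n\,\taulog$.

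Finally I would assemble the pieces. Inserting the lower bound on the subtracted term into (A) gives $nR_s\le h(Y^n\mid S^n)-\tfrac12 h(Z^n\mid S^n)+n\,\taulog$, and substituting the collapsed form of (B), i.e. $h(Z^n\mid S^n)\ge nR_s-n\,\taulog$, yields
\[
\tfrac32\,nR_s\le h(Y^n\mid S^n)+n\,\taulog\le n\log_2 P+n\,\taulog .
\]
Dividing by $n\log_2 P$ and letting $n\to\infty$ and then $P\to\infty$ gives $d\le 2/3$. The main obstacle is the statistical-equivalence lemma: one must rigorously justify that, conditioned on the past outputs and CSI, the current outputs $(y_t,z_t)$ are exchangeable---carefully handling the stochastic encoder's auxiliary randomness and the dependence of $\vx_t$ on the delayed CSI of \emph{both} channels---and control the accumulation of the $\taulog$ terms when the per-symbol inequalities are summed. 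The remaining manipulations are routine.
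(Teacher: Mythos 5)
Your proposal is correct and follows essentially the same route as the paper: the key ingredient in both is the statistical-equivalence (no-current-CSIT) entropy inequality $2h(Z^n\cond S^n)\ge h(Y^n,Z^n\cond S^n)$, applied both unconditionally and conditioned on $W$, which yields exactly the paper's two bounds $nR\lesssim h(Z^n\cond S^n)$ and $nR\lesssim h(Y^n\cond S^n)-\tfrac12 h(Z^n\cond S^n)$. The only differences are cosmetic bookkeeping: you eliminate $h(Z^n\cond S^n)$ by substituting one bound into the other (giving $\tfrac32 nR\le h(Y^n\cond S^n)+n\,\taulog$) where the paper takes a max--min over the entropies, and you invoke the secrecy constraint where the paper uses conditioning-reduces-entropy in deriving the second bound.
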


For sake of clarity, we remove the channel state from the expressions since these can be considered as additional channel outputs. Before proving the Theorem, let us start by setting the following constraints: 
\begin{align}
  h(\ry_t \cond \ry^{t-1}, \rz^{t-1}) &= h(\rz_t \cond \ry^{t-1}, \rz^{t-1}),
  \label{eq:tmp01}\\
  h(\ry_t \cond \ry^{t-1}, \rz^{t-1}, W) &= h(\rz_t \cond \ry^{t-1}, \rz^{t-1}, W),
  \label{eq:tmp02}
\end{align}%
for $t=1,\dots,n$.  Note that these are direct consequences of our assumptions: (i) the legitimate and the eavesdropper channel have the same statistics, (ii) the channel input cannot depend on either of the instantaneous channels, and (iii) the marginal distributions of both outputs are equal given the same previous observations and/or the source message. \vspace{1mm}
\begin{lemma}
The following inequalities hold true under the constraints
\eqref{eq:tmp01} and \eqref{eq:tmp02}:  
  \begin{align}
    h(\rz^n) &\ge h(\ry^n \cond \rz^n), \label{eq:tmp1}\\
    h(\ry^n) &\ge h(\rz^n \cond \ry^n),\\
    h(\rz^n\cond W) &\ge h(\ry^n \cond \rz^n, W), \\
    h(\ry^n\cond W) &\ge h(\rz^n \cond \ry^n, W). 
  \end{align}%
\end{lemma}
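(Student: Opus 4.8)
The plan is to prove the first inequality \eqref{eq:tmp1} by a term-by-term comparison of chain-rule expansions, and then to obtain the remaining three almost for free: two of them by the $\ry\leftrightarrow\rz$ symmetry of the hypotheses, and the other two by rerunning the same argument with the message $W$ adjoined to every conditioning set and invoking \eqref{eq:tmp02} in place of \eqref{eq:tmp01}.

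Concretely, for \eqref{eq:tmp1} I would expand the right-hand side along the natural time ordering,
$$h(\ry^n\cond\rz^n)=\sum_{t=1}^n h(\ry_t\cond\ry^{t-1},\rz^n),$$
and the left-hand side as $h(\rz^n)=\sum_{t=1}^n h(\rz_t\cond\rz^{t-1})$. It then suffices to prove the per-index bound $h(\ry_t\cond\ry^{t-1},\rz^n)\le h(\rz_t\cond\rz^{t-1})$ and sum over $t$. I would obtain this bound by sandwiching the hypothesis between two applications of ``conditioning reduces entropy.'' First drop the future observations $\rz_t,\dots,\rz_n$ to get
$$h(\ry_t\cond\ry^{t-1},\rz^n)\le h(\ry_t\cond\ry^{t-1},\rz^{t-1}).$$
Next apply \eqref{eq:tmp01} to rewrite this common-past conditional as $h(\rz_t\cond\ry^{t-1},\rz^{t-1})$. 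Finally drop the past $\ry^{t-1}$ to get $h(\rz_t\cond\ry^{t-1},\rz^{t-1})\le h(\rz_t\cond\rz^{t-1})$, which closes the chain.

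For the third inequality I would run exactly the same three-step chain with $W$ appended to every conditioning set, using \eqref{eq:tmp02} at the middle step instead of \eqref{eq:tmp01}. The second and fourth inequalities are the images of the first and third under interchanging the roles of $\ry$ and $\rz$; since both hypotheses \eqref{eq:tmp01}--\eqref{eq:tmp02} are themselves symmetric in $\ry$ and $\rz$, no additional argument is needed for them.

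The only point requiring care — and the step I would double-check — is that the equality hypothesis may be used only on conditionals with matched indices: \eqref{eq:tmp01} identifies $h(\ry_t\cond\ry^{t-1},\rz^{t-1})$ with $h(\rz_t\cond\ry^{t-1},\rz^{t-1})$ precisely when the conditioning is on the \emph{common} past $(\ry^{t-1},\rz^{t-1})$. Hence the first reduction must stop exactly at $\rz^{t-1}$, neither retaining $\rz_t$ nor discarding too much, before the constraint can be invoked; everything else is routine monotonicity of conditional entropy.
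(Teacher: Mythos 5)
Your proof is correct. It rests on the same three ingredients as the paper's proof --- the chain rule, monotonicity of conditional (differential) entropy, and the hypothesis \eqref{eq:tmp01} invoked exactly at the common-past conditional --- but it organizes them differently. The paper starts from $2h(\rz^n)$, lower-bounds both copies by $\sum_t h(\rz_t \cond \ry^{t-1},\rz^{t-1})$, converts one copy into $\sum_t h(\ry_t \cond \ry^{t-1},\rz^{t-1})$ via \eqref{eq:tmp01}, and then recombines the two sums through $h(\ry_t\cond\cdot)+h(\rz_t\cond\cdot)\ge h(\ry_t,\rz_t\cond\cdot)$ and the chain rule into the joint entropy $h(\ry^n,\rz^n)=h(\rz^n)+h(\ry^n\cond\rz^n)$, from which \eqref{eq:tmp1} follows by cancellation. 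Your term-by-term sandwich
\begin{align*}
h(\ry_t\cond\ry^{t-1},\rz^n)\le h(\ry_t\cond\ry^{t-1},\rz^{t-1})= h(\rz_t\cond\ry^{t-1},\rz^{t-1})\le h(\rz_t\cond\rz^{t-1})
\end{align*}
avoids both the doubling trick and the passage through the joint entropy, so it is arguably the more direct argument. What the paper's detour buys is the intermediate bound $2h(\rz^n)\ge h(\ry^n,\rz^n)$ (the step at \eqref{eq:tmp291}), which is reused verbatim in the proof of the next lemma to get $h(\ry^n)\le 2h(\rz^n)$; under your route that bound is still available, but requires one extra line, namely $h(\ry^n,\rz^n)=h(\rz^n)+h(\ry^n\cond\rz^n)\le 2h(\rz^n)$ by \eqref{eq:tmp1}. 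Your treatment of the remaining three inequalities --- $\ry\leftrightarrow\rz$ symmetry, plus rerunning the chain with $W$ adjoined to every conditioning set and \eqref{eq:tmp02} in place of \eqref{eq:tmp01} --- is precisely what the paper compresses into ``by symmetry of the problem, it is enough to prove the first inequality,'' and the caveat you flag (that the hypothesis may only be applied at matched common-past conditionals, so the future $\rz_t,\dots,\rz_n$ must be dropped first and $\ry^{t-1}$ only afterwards) is indeed the one place where the order of the monotonicity steps matters.
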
\vspace{1mm}
\begin{proof}
  By symmetry of the problem, it is enough to prove the first inequality \eqref{eq:tmp1} as follows 
  \begin{align}
   2h(\rz^n)&= 2\sum_{t=1}^n h(\rz_t \cond
    \rz^{t-1}) \label{eq:cr1}\\
    &\ge 2\sum_{t=1}^n h(\rz_t \cond \ry^{t-1},
    \rz^{t-1})\label{eq:cond} \\
    &= \sum_{t=1}^n  h(\ry_t \cond \ry^{t-1}, \rz^{t-1}) +
    h(\rz_t \cond \ry^{t-1}, \rz^{t-1}) \label{eq:tmp343}\\
    &\ge \sum_{t=1}^n h(\ry_t, \rz_t \cond \ry^{t-1},
    \rz^{t-1}) \label{eq:tmp874} \\
    &= h(\ry^{n}, \rz^{n}) \label{eq:tmp291}\\
    &= h(\rz^{n}) + h(\ry^{n} \cond \rz^{n})
    \label{eq:tmp754}
  \end{align}%
  where \eqref{eq:cr1} and \eqref{eq:tmp291} are from the chain rule;
  \eqref{eq:cond} holds because conditioning reduces entropy;
  \eqref{eq:tmp343} is from \eqref{eq:tmp01}. From \eqref{eq:tmp754}, \eqref{eq:tmp1} is immediate. 
\end{proof}

We are now ready to provide the following lemma that is essential to our
main results. 
\begin{lemma}
  Under constraints \eqref{eq:tmp01} and \eqref{eq:tmp02}, we have:
  \begin{align}
    h(\ry^n) &\le 2 h(\rz^n), \label{eq:tmp822}\\ 
    h(\rz^n) &\le 2 h(\ry^n), \label{eq:tmp823}\\ 
    h(\ry^n\cond W) &\le 2 h(\rz^n \cond W), \label{eq:tmp824}\\ 
    h(\rz^n\cond W) &\le 2 h(\ry^n \cond W),\label{eq:tmp876} \\
    I(W;\ry^n) - I(W;\rz^n) &\le h(\rz^n).\label{eq:tmp825} 
  \end{align}%
\end{lemma}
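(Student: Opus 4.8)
The plan is to obtain all five inequalities from Lemma~1 (chiefly \eqref{eq:tmp1} and its variants) together with the chain rule and one extra ingredient: the non-negativity of conditional differential entropies, which is guaranteed here by the unit-variance additive Gaussian noise present in each output. Everything else is bookkeeping, and the $\ry\leftrightarrow\rz$ symmetry of the constraints \eqref{eq:tmp01}--\eqref{eq:tmp02} lets me prove one inequality of each pair and transcribe the rest.

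First I would establish \eqref{eq:tmp822} in three lines,
\[
h(\ry^n) \le h(\ry^n,\rz^n) = h(\rz^n) + h(\ry^n \cond \rz^n) \le 2h(\rz^n),
\]
where the middle equality is the chain rule, the last step is exactly \eqref{eq:tmp1}, and the first step uses $h(\rz^n\cond\ry^n)\ge 0$. The companions \eqref{eq:tmp823}, \eqref{eq:tmp824} and \eqref{eq:tmp876} follow from the identical argument after swapping $\ry$ and $\rz$ (for \eqref{eq:tmp823}) and/or conditioning every term on $W$ (for \eqref{eq:tmp824} and \eqref{eq:tmp876}), invoking the matching inequality of Lemma~1 in the final step.

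For \eqref{eq:tmp825} I would deliberately avoid expanding the two mutual informations termwise: that route reduces the claim to $h(\rz^n\cond W)\le h(\ry^n\cond W)$, which the symmetry of the hypotheses rules out as a valid consequence (it would force equality). Instead, expanding $I(W;\ry^n,\rz^n)$ in the two natural orders gives
\[
I(W;\ry^n) - I(W;\rz^n) = I(W;\ry^n \cond \rz^n) - I(W;\rz^n \cond \ry^n) \le I(W;\ry^n \cond \rz^n),
\]
after discarding the non-negative term $I(W;\rz^n\cond\ry^n)$. I then bound $I(W;\ry^n\cond\rz^n)=h(\ry^n\cond\rz^n)-h(\ry^n\cond\rz^n,W)\le h(\ry^n\cond\rz^n)\le h(\rz^n)$, using $h(\ry^n\cond\rz^n,W)\ge 0$ and finally \eqref{eq:tmp1}.

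The main obstacle is precisely the justification of the non-negativity facts $h(\rz^n\cond\ry^n)\ge 0$ and $h(\ry^n\cond\rz^n,W)\ge 0$, since for generic continuous variables the step $h(\ry^n)\le h(\ry^n,\rz^n)$ can fail. Here it holds because conditioning additionally on the input $\vx^n$ and on the channel realizations reduces each output to its irreducible additive Gaussian noise, so that $h(\rz^n\cond\ry^n)\ge n\log_2(\pi e)>0$ and likewise $h(\ry^n\cond\rz^n,W)\ge n\log_2(\pi e)>0$. This physical feature of the model is exactly what upgrades the joint-entropy comparisons of Lemma~1 into the marginal bounds stated in the lemma.
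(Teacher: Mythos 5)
Your proof is correct and follows essentially the same route as the paper's: both obtain \eqref{eq:tmp822}--\eqref{eq:tmp876} by combining the chain rule for $h(\ry^n,\rz^n)$ with Lemma~1 and the AWGN-based non-negativity $h(\rz^n \cond \ry^n)\ge \taulog$, and both obtain \eqref{eq:tmp825} from the decomposition $I(W;\ry^n)-I(W;\rz^n)=I(W;\ry^n\cond\rz^n)-I(W;\rz^n\cond\ry^n)$ followed by $I(W;\ry^n\cond\rz^n)\le h(\ry^n\cond\rz^n)\le h(\rz^n)$. The only differences are presentational: your chains run in the opposite direction from the paper's, and you make explicit the noise-floor justification $h(\rz^n\cond\ry^n)\ge n\log_2(\pi e)$ that the paper relegates to a footnote.
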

\begin{proof}
  To prove \eqref{eq:tmp822}, from \eqref{eq:tmp291}, we have
  \begin{align*}
    2 h(\rz^n) &\ge h(\ry^n, \rz^n) \\ 
    &= h(\ry^n) + h(\rz^n \cond \ry^n) \\
    &\ge h(\ry^n)  
  \end{align*}%
  where the last inequality\footnote{It is true since $\rz^n$ contains
  AWGN that is independent from $\ry^n$.} comes from the fact that $h(\rz^n \cond
  \ry^n)\geq \taulog$. Same steps can be applied to obtain
  \eqref{eq:tmp823}-\eqref{eq:tmp824}.  To show \eqref{eq:tmp825},
  we start from \eqref{eq:tmp1}
\begin{align*}
  h(\rz^n) &\ge h(\ry^n \cond \rz^n) \\
  &\ge  I(W;\ry^n \cond \rz^n) \\
  &\ge  I(W;\ry^n \cond \rz^n) -  I(W;\rz^n \cond \ry^n) \\
  &= I(W;\ry^n) - I(W;\rz^n). 
\end{align*}%
\end{proof}
The inequality (\ref{eq:tmp876}) implies that 
\begin{align}
  \lefteqn{I(W;\ry^n) - I(W;\rz^n)} \nonumber\qquad \\ &= h(\ry^n) -
  h(\ry^n\cond W) - h(\rz^n) + h(\rz^n\cond W) \nonumber \\
  &\le  h(\ry^n) + \frac{1}{2} h(\rz^n\cond W) - h(\rz^n) \nonumber \\
  &\le h(\ry^n)  - \frac{1}{2} h(\rz^n)\label{eq:tmplast}
\end{align}
By combining two bounds (\ref{eq:tmp876}) and (\ref{eq:tmplast}), we have
\begin{align}
  \lefteqn{I(W;\ry^n) - I(W;\rz^n)} \nonumber\qquad \\ &\le \min\left\{ h(\rz^n),\ h(\ry^n) - \frac{1}{2}
  h(\rz^n) \right\} \nonumber\\
  &\le\max_{h(\ry^n)} \max_{h(Z^n)} \min\left\{  h(\rz^n),\ h(\ry^n) - \frac{1}{2}
  h(\rz^n) 
  \right\} \\
  &\le \frac{2}{3} n \log_2(P) + \taulog. 
\end{align}%
We now verify that \eqref{eq:tmp01} and \eqref{eq:tmp02} still hold 
given $\rvh^n$ and $\rvg^n$
\begin{align*}
  \lefteqn{h(\ry_t \cond \ry^{t-1}, \rz^{t-1}, \rvh^n, \rvg^n)}\qquad\\ &= h(\ry_t \cond \ry^{t-1},
  \rz^{t-1}, \rvh^{t-1}, \rvg^{t-1}, \rvh_t) \\
  &= h(\rz_t \cond \ry^{t-1}, \rz^{t-1}, \rvh^{t-1}, \rvg^{t-1}, \rvg_t) \\
  &= h(\rz_t \cond \ry^{t-1}, \rz^{t-1}, \rvh^n, \rvg^n)  
\end{align*}%
from the fact that current channel outputs do not depend on the future channel
realizations. Similarly, 
\begin{multline*}
  h(\ry_t \cond \ry^{t-1}, \rz^{t-1}, \rvh^n, \rvg^n, W) \\= h(\rz_t \cond \ry^{t-1}, \rz^{t-1}, \rvh^n, \rvg^n, W).  
\end{multline*}

We are ready to prove Theorem \ref{thm:ub} as follows. From Fano's inequality and the secrecy constraint we have that
\begin{align*}
  n(R&-\taulog ) \\
&\le I(W;\ry^n \cond \rvh^n, \rvg^n) - I(W;\rz^n \cond \rvh^n, \rvg^n) \\
    &\le \min\left\{ h(\rz^n \cond \rvg^n ),\ h(\ry^n \cond \rvh^n ) - \frac{1}{2}
    h(\rz^n \cond \rvg^n ) \right\}  \\
  &\le \frac{2}{3} n \log_2 P,
\end{align*}%
which concludes the proof of the theorem. 

\subsection{Achievability: Symmetric Case}
With delayed CSIT on both the legitimate and eavesdropper channels, the
upper bound is indeed achievable. 
\begin{theorem}[symmetric case]
  The fundamental SDoF of a two-user MISO wiretap channel with delayed CSIT from both the legitimate
  and the eavesdropper channel is $d=\frac{2}{3}$. \vspace{1mm}
\end{theorem}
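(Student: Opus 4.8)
The converse is already in hand: Theorem~\ref{thm:ub} gives $d\le\frac23$ for any scheme without instantaneous CSIT, and the symmetric setting is a special case of it. The plan is therefore to establish \emph{achievability} by exhibiting an artificial-noise scheme operating over blocks of three channel uses and delivering two confidential symbols per block, so that the legitimate receiver reliably decodes both symbols while the eavesdropper's observation becomes asymptotically independent of them. This yields a secure pre-log of $\frac23$, matching the bound. Throughout I would work with $M=2$ and treat quantities in the high-SNR (DoF) sense, deferring the passage to a formal code to the last step.

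The scheme I would use has two phases within each length-three block. In the first phase the transmitter sends a pure artificial-noise vector $\vu=(u_1,u_2)^\T$ of i.i.d.\ Gaussian symbols, so that the two receivers observe the independent linear combinations $\vh_1^\T\vu$ and $\vg_1^\T\vu$. The essential point is that, by the next channel use, delayed CSIT reveals $\vh_1$ and $\vg_1$, so the transmitter learns exactly the scalar $\vg_1^\T\vu$ that the eavesdropper has already received. In the second phase the transmitter sends the two confidential symbols $s_1,s_2$, reusing this overheard noise so as to (i) align the residual interference at the legitimate receiver, which can reconstruct its own past noise contribution $\vh_1^\T\vu$ once it holds enough equations, and (ii) re-present to the eavesdropper only noise-corrupted combinations of $(s_1,s_2)$. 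Over the three uses the legitimate receiver thus collects two interference-free independent equations in $(s_1,s_2)$, whereas every combination of $(s_1,s_2)$ reaching the eavesdropper stays masked by an independent artificial-noise symbol.

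I would then verify the two requirements separately. For reliability it suffices that the $2\times2$ system the legitimate receiver extracts, after cancelling the reconstructed artificial noise with its delayed CSI and its own channel knowledge, is almost surely full rank, so both symbols are recovered with vanishing error and $I(W;\ry^n\cond \vh^n,\vg^n)=\frac23 n\log_2 P+\taulog$. For secrecy I would show that, conditioned on the channels, the eavesdropper's three observations depend on $(s_1,s_2)$ only through directions that are each drowned by an independent Gaussian artificial-noise symbol of comparable power, whence $I(W;\rz^n\cond \vh^n,\vg^n)=\taulog$. Combining the two gives the secure pre-log $\frac23$.

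The last step converts this symbol-level linear construction into a code meeting Definition~1: one superposes a standard Gaussian wiretap (random-binning) codebook on the artificial-noise layer, so that the achievable equivocation rate $I(W;\ry^n)-I(W;\rz^n)$ matches the computed difference up to $\taulog$, forcing $n^{-1}I(W;Z^n,H^n,G^n)/\log_2 P\to 0$ while keeping $P_e^{(n)}\to 0$. I expect the main obstacle to be the joint design of the second-phase signal: the \emph{same} retransmitted noise must simultaneously align to a cancellable direction at the legitimate receiver and span the message directions at the eavesdropper, and this must hold for arbitrary stationary fading using only delayed CSI. Verifying the secrecy claim rigorously --- that the leakage is genuinely $\taulog$ rather than a constant fraction of $\log_2 P$ --- is the most delicate part, and is where the alignment/masking structure must be argued carefully rather than by dimension counting alone.
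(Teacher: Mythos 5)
Your overall architecture coincides with the paper's: the converse is inherited from Theorem~\ref{thm:ub}, and achievability comes from a three-slot artificial-noise scheme that spends the first slot on pure noise injection and the remaining two slots delivering two confidential symbols, so the pre-log is $2/3$. The route is right, but the one concrete design choice you commit to is wrong, and it is precisely the step on which the whole scheme stands or falls. You declare that the ``essential point'' is that delayed CSIT lets the transmitter learn the scalar $\vg_1^\T\vu$ that the eavesdropper has already received, and that the second phase reuses \emph{this} overheard noise as the mask for $(s_1,s_2)$. If the mask superimposed on the confidential symbols is $\vg_1^\T\vu$, secrecy fails: the eavesdropper already holds $z_1=\vg_1^\T\vu+b_1$ and knows its own instantaneous channel, so it can form $z_2-g_{21}z_1=\vg_2^\T\vv+b_2-g_{21}b_1$, a clean equation in the secrets up to unit-power noise --- a leakage of order $\log_2 P$, not $\taulog$. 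Your item (i), in which the legitimate receiver cancels ``its own past noise contribution $\vh_1^\T\vu$,'' is inconsistent with that choice, so the proposal is at best internally contradictory on the crucial point and at worst describes a non-secure scheme.

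The correct construction (the paper's) is the mirror image: in slot 2 superimpose $\vh_1^\T\vu$, i.e.\ the combination the \emph{legitimate} receiver already observed (so it can cancel it from $y_2$ using $y_1$), which, conditioned on $z_1$, still has variance of order $P$ at the eavesdropper and therefore genuinely masks $\vg_2^\T\vv$. Delayed CSIT of the eavesdropper's channel is used differently, and one slot later, than you suggest: in slot 3 the transmitter reconstructs the eavesdropper's \emph{slot-2} observation $\vg_2^\T\vv+g_{21}\vh_1^\T\vu$ (this requires $\vg_2$ and $g_{21}$, not $\vg_1$) and retransmits it. After cancelling $\vh_1^\T\vu$, the legitimate receiver then has the two almost-surely independent equations $\vh_2^\T\vv$ and $\vg_2^\T\vv$, while the eavesdropper's third observation is a scaled copy of its second and hence informationless. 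With this fix, the two verifications you outline --- the full-rank $2\times2$ system giving $I(\rvv;\ry^3)=2\log_2 P+\taulog$, and the residual-variance argument giving $I(\rvv;\rz^3)=\taulog$ --- go through exactly as in the paper, whose proof consists of precisely these two computations; the random-binning layer you add at the end is standard and unobjectionable.
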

\begin{proof}
The converse follows from Theorem \ref{thm:ub}. Inspired by the artificial noise (AN) scheme~\cite{goel2008guaranteeing}, we
propose a three-slot scheme sending four independent
Gaussian-distributed symbols $\vu \defeq [u_1\ u_2]^\T$, $\vv\defeq
[v_1\ v_2]^\T$,  whose powers scale equally with $P$.  In the first slot, the AN $\uv$ is sent. In the second slot, the transmitter sends the useful symbols $\vv$ together with the AN seen by the legitimate receiver in the first slot. Finally, we repeat the observation of the eavesdropper in the second slot (without thermal noise). By ignoring scaling terms in the transmit vectors, the channel inputs/outputs are given in \eqref{eq:tmp201}-\eqref{eq:tmp203}.
\begin{figure*}[!th]
\begin{flalign}
   \xv_1 &= \vu& 
   \xv_2 &= \vv+ [ \hv_1^{\T} \uv \quad 0]^\T
          &  
   \xv_3 &=[\gv_2 ^\T \vv + g_{21}\hv_1^{\T} \uv\quad 0]^\T
   \label{eq:tmp201}\\
   y_1 &= \hv_1^\T \uv + e_1 & 
   y_2 &= \hv_2^{\T}\vv + h_{21}\hv_1^{\T} \uv + e_2 & 
   y_3 &=  h_{31} \gv_2 ^\T \vv +  h_{31}g_{21}\hv_1^{\T} \uv + e_3 \\
   z_1 &= \gv_1^{\T} \uv + b_1 & 
   z_2 &= \gv_2 ^\T \vv + g_{21}\hv_1^{\T} \uv + b_2 &
   z_3 &= g_{31} \gv_2 ^\T \vv +g_{31} g_{21}\hv_1^{\T} \uv+ b_3
   \label{eq:tmp203} 
\end{flalign}
\end{figure*}
The received signals can be rewritten as 
\begin{align}
  \begin{bmatrix} y_1 \\ y_2 \\ y_3\end{bmatrix} &= 
    \begin{bmatrix} 
      1  & \zerov \\
       h_{21} & \hv_2^\T\\ 
        h_{31}g_{21} & h_{31}\gv_2^{\T}
  \end{bmatrix}_{3\times3} \begin{bmatrix}  \hv_1^{\T}\uv \\
          \vv\end{bmatrix}_{3\times1} +
    \begin{bmatrix} e_1 \\ e_2\\ e_3\end{bmatrix}, \\
  \begin{bmatrix} z_1 \\ z_2\\ z_3 \end{bmatrix} &= \begin{bmatrix}
   \gv_1^{\T}  & 0 \\
          g_{21}\hv_1^{\T} & 1 \\
          g_{31}g_{21}\hv_1^{\T} & g_{31}
   \end{bmatrix}_{3\times3} \begin{bmatrix} \uv \\
          \gv_2^{\T}\vv\end{bmatrix}_{3\times1} + \begin{bmatrix} b_1 \\ b_2\\ b_3\end{bmatrix}.
\end{align}%
The following remarks are in order; (i) since the equivalent channel matrix is full-rank, the useful signal $\vv$ can be recovered from $\yv$,  (ii) the eavesdropper's observation is completely drowned in the artificial noise $\vu$. More precisely, we have
\begin{align}
  I(\rvv;\ry^3) &= 2 \log_2(P) + \taulog, \\
  I(\rvv;\rz^3) &= \taulog,
\end{align}%
which implies a SDoF $d=2/3$. 
\end{proof}
\subsection{Achievability: Asymmetric Case}
\begin{theorem}(asymmetric case)
  With delayed CSIT only on the legitimate channel, an SDoF $d=1/2$ is
  achievable. 
\end{theorem}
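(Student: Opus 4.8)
The plan is to mirror the symmetric three-slot construction but compress it into a two-slot artificial-noise scheme that only requires delayed CSI of the legitimate channel. Fix $M=2$ and transmit, over two channel uses, a single useful Gaussian symbol $v$ (with random variable $\rvv$) of power $\sim P$ together with two artificial-noise symbols $\uv=[u_1\ u_2]^\T$, each of power $\sim P$. In the first slot I would send pure AN, $\xv_1=\uv$, so that $y_1=\hv_1^\T\uv+e_1$ and $z_1=\gv_1^\T\uv+b_1$. The crucial point is that after the first slot the transmitter learns $\hv_1$ through the delayed legitimate feedback, so it can reconstruct the exact scalar $\hv_1^\T\uv$ that the AN induces at the legitimate receiver. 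In the second slot I would then superimpose the message on this reconstructed interference, $\xv_2=[\,v+\hv_1^\T\uv\ \ 0\,]^\T$, which yields $y_2=h_{21}(v+\hv_1^\T\uv)+e_2$ and $z_2=g_{21}(v+\hv_1^\T\uv)+b_2$.

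I would then verify decodability and the degrees of freedom at the legitimate receiver. Since $y_1$ already provides a noisy copy of $\hv_1^\T\uv$, the receiver can subtract $h_{21}\hv_1^\T\uv$ from $y_2$ and obtain $h_{21}v$ up to bounded thermal noise, recovering $v$ with $I(\rvv;\ry^2)=\log_2 P+\taulog$. For secrecy I would collect the eavesdropper's two observations into the linear system
\[
  \begin{bmatrix} z_1 \\ z_2 \end{bmatrix}
  = \begin{bmatrix} g_{11} & g_{12} & 0 \\ g_{21}h_{11} & g_{21}h_{12} & g_{21} \end{bmatrix}
    \begin{bmatrix} u_1 \\ u_2 \\ v \end{bmatrix}
  + \begin{bmatrix} b_1 \\ b_2 \end{bmatrix},
\]
in the unknowns $(u_1,u_2,v)$. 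The $2\times2$ block multiplying $(u_1,u_2)$ has determinant $g_{21}(g_{11}h_{12}-g_{12}h_{11})$, which is nonzero almost surely, so the two AN symbols already fill both of the eavesdropper's observation dimensions. Evaluating $h(\rz^2)-h(\rz^2\cond\rvv)$ then gives $I(\rvv;\rz^2)=\taulog$, and combining the two yields an SDoF of $\tfrac12[I(\rvv;\ry^2)-I(\rvv;\rz^2)]/\log_2 P\to \tfrac12$.

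The hard part will be the secrecy step, precisely because the transmitter has no knowledge of $\vg$ and therefore cannot tailor the AN to the eavesdropper's channel. The argument has to be purely a dimension-counting one: I must show that the AN combination reappearing in $z_2$, namely $g_{21}\hv_1^\T\uv$, is almost surely linearly independent of the combination $\gv_1^\T\uv$ seen in $z_1$, so that $u_1,u_2$ genuinely span the whole two-dimensional observation space of the eavesdropper and leave no clean dimension for $v$. This reduces to the full-rank condition above, which holds with probability one because $\hv_1,\gv_1,\gv_2$ are drawn independently from continuous fading distributions. The remaining checks — that the per-slot power constraint is met (the amplitude of $\hv_1^\T\uv$ scales as $\sqrt{P}$, hence its power is $\sim P$) and that the noise left after interference cancellation at the legitimate receiver stays bounded in $P$ — are routine.
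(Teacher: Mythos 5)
Your proposal is correct and takes essentially the same route as the paper: a two-slot artificial-noise scheme in which slot 1 sends pure AN and slot 2 uses the delayed legitimate CSI to re-inject $\vh_1^\T\vu$ together with the single message symbol, with secrecy resting on the almost-sure full rank of the eavesdropper's $2\times 2$ AN matrix so that $I(\rvv;\rz^2)=\taulog$ while $I(\rvv;\ry^2)=\log_2 P+\taulog$. The only (immaterial) difference is that you superimpose $v$ on the reconstructed AN on the first antenna, $\xv_2=[\,v+\vh_1^\T\vu\ \ 0\,]^\T$, whereas the paper places it on the second antenna, $\vx_2=[\vh_1^\T\vu\ \ v]^\T$; both variants admit identical decodability and dimension-counting secrecy arguments.
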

\begin{proof}
The achievability is based on the following two-slot scheme sending three independent
Gaussian-distributed symbols $\vu \defeq [u_1\ u_2]^\T$ and $v$: 
\begin{flalign}
   \vx_1 &= \vu &
   \vx_2 &= [\vh_1^\T \vu\quad  v]^\T  \\
   y_1 &= \vh_1^\T \vu + e_1 & 
   y_2 &= h_{21} (\vh_1^\T \vu) + h_{22}  v + e_2  \\
   z_1 &= \vg_1^\T \vu + b_1 & 
   z_2 &= g_{21} (\vh_1^\T \vu) + g_{22}  v + b_2 
\end{flalign}
The received signal can be rewritten as 
\begin{align}
  \begin{bmatrix} y_1 \\ y_2 \end{bmatrix} &= 
    \begin{bmatrix} 
      1 & 0 \\ h_{21} & h_{22} \end{bmatrix}_{2\times2} \begin{bmatrix}
        \vh_1^\T \vu \\ v \end{bmatrix}_{2\times1} +
    \begin{bmatrix} e_1 \\ e_2\end{bmatrix} \\
  \begin{bmatrix} z_1 \\ z_2 \end{bmatrix} &= \begin{bmatrix}
    \vg_1^\T & 0 \\ g_{21} \vh_1^\T & g_{22}
  \end{bmatrix}_{2\times3}\begin{bmatrix} \vu \\ v
  \end{bmatrix}_{3\times1}  + \begin{bmatrix} b_1 \\ b_2\end{bmatrix} 
\end{align}%
from which we remark that: (i) the useful signal $v$ can be recovered from $\yv$, and (ii) it is completely drowned in the artificial
noise $\vu$ at the eavesdropper side, i.e., 
\begin{align}
  I(\rv;\ry^2) &= \log_2(P) + \taulog, \\
  I(\rv;\rz^2) &= \taulog,
\end{align}%
which implies an SDoF $d=1/2$. 
\end{proof}\vspace{1mm}

It is still unknown if $1/2$ is the best possible SDoF with only delayed
CSIT on the legitimate channel. Nevertheless, it can be shown that it is
indeed optimal within the class of Gaussian inputs. As a matter of fact,
we can show that  
\begin{align}
  h(\rz^n \cond \rvh^n, \rvg^n) \ge h(\ry^n \cond \rvh^n, \rvg^n) + \taulog
\end{align}%
the proof of which is omit due to page limit. Therefore, it is straightforward to get
\begin{align*}
  n(R-\taulog) 
  &\le I(W;\ry^n \cond \rvh^n, \rvg^n) - I(W;\rz^n \cond \rvh^n, \rvg^n) \\
  &\le h(\rz^n \cond W, \rvh^n, \rvg^n) - h(\ry^n \cond W, \rvh^n, \rvg^n)  \\
  &\le  \frac{1}{2} h(\rz^n \cond W, \rvh^n, \rvg^n) \\
  &\le \frac{1}{2} n \log_2 P.   
\end{align*}

\section{Broadcast Channel with Confidential Messages (BCC)}

We now characterize the fundamental SDoF region of the two-user MISO-BCC with delayed CSIT on both channels.  In this setting, the transmitter wishes to send two messages $(W_1,W_2)$ to receivers 1 and 2, respectively, while keeping each of them secret to the unintended receiver, i.e.
\begin{align}  \label{eq:Constraint1-bcc}
& \lim_{P\rightarrow \infty} \limsup_{n\rightarrow \infty} \frac{n^{-1} I(W_1;Z^n,H^n,G^n)}{\log_2 P}=0,\\  \label{eq:Constraint2-bcc}
& \lim_{P\rightarrow \infty} \limsup_{n\rightarrow \infty} \frac{n^{-1} I(W_2;Y^n,H^n,G^n)}{\log_2 P}=0. 
\end{align}
The channel models, the definition of a code and achievability remain similar to those of Section II. 
Let us begin with the proof of the outer bound on the SDoF region. Then, we show the achievability of the corner (sum SDoF) point involved in the region and by a simple time-sharing argument we will prove that our outer bound is tight. 

\subsection{Outer Bound on the SDoF Region of BCC}

\begin{theorem}[outer bound] \label{thm:ubBCC}
The SDoF region of the two-user MISO-BCC with delayed CSIT from both receivers is outer-bounded by
\begin{align*}
\mathcal{R}_{\textrm{BCC}} = \Big\{(d_1,d_2)\in\mathbb{R}_2^+:\quad
3d_1+d_2 \leq 2,\ d_1+3d_2 \leq 2 \Big\}.
\end{align*}
The region is illustrated in Fig. 1.
\end{theorem}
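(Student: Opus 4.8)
The plan is to prove only the first inequality $3d_1+d_2\le 2$; the second follows by exchanging the roles of the two receivers, exactly as in the symmetry argument preceding Theorem~\ref{thm:ub}.

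First I would set up per-user secrecy-rate bounds by a genie argument, conditioning receiver~$1$'s bound on $W_2$ and receiver~$2$'s on $W_1$. Fano's inequality together with the secrecy constraints \eqref{eq:Constraint1-bcc}--\eqref{eq:Constraint2-bcc} give $nR_1 \le I(W_1;\ry^n\cond W_2) - I(W_1;\rz^n\cond W_2) + \taulog$ and $nR_2 \le I(W_2;\rz^n\cond W_1) - I(W_2;\ry^n\cond W_1) + \taulog$. The only point to check is that conditioning on the opposite message does not destroy secrecy: since $W_2$ is decodable at its legitimate receiver, $H(W_2\cond \rz^n)=\taulog$, so $I(W_1;\rz^n\cond W_2)\le I(W_1;\rz^n)+H(W_2\cond\rz^n)=\taulog$, and dually for $W_2$. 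I would also verify, as was done after \eqref{eq:tmp825}, that the symmetry relations \eqref{eq:tmp01}--\eqref{eq:tmp02} persist under further conditioning on $W_2$ (resp. $W_1$) and on $\rvh^n,\rvg^n$, since these are side information that cannot depend on the instantaneous channels.

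Next I would apply the conditional forms of Lemma~2 to each secrecy rate. Conditioning the single-user chain on $W_2$ yields $nR_1\le \min\{\,h(\rz^n\cond W_2),\ h(\ry^n\cond W_2)-\tfrac12 h(\rz^n\cond W_2)\,\}+\taulog$, and symmetrically for $R_2$. Two structural facts then drive the proof. (i) Adding the two secrecy-rate \emph{expressions}, the fully message-conditioned ``artificial-noise'' entropies $h(\ry^n\cond W_1,W_2)$ and $h(\rz^n\cond W_1,W_2)$ cancel, since they enter with opposite signs; after the cross bounds \eqref{eq:tmp822}--\eqref{eq:tmp876} and the power constraint $h(\cdot)\le n\log_2 P+\taulog$ this already yields the sum bound $d_1+d_2\le 1$. (ii) For the asymmetric facet I would instead form the weighted combination realizing $3R_1+R_2$, use the secrecy constraint ($h(\ry^n\cond W_2)=h(\ry^n)+\taulog$, $h(\rz^n\cond W_1)=h(\rz^n)+\taulog$) to reduce everything to $h(\ry^n),h(\rz^n),h(\ry^n\cond W_1),h(\rz^n\cond W_2)$, and then bound the result by $2\,n\log_2 P$.

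The hard part is this last reduction. The cancellation in (i) occurs only at equal weights, whereas the target facet demands weight $3$ on $R_1$ and $1$ on $R_2$; and if one controls the surviving conditional entropies $h(\rz^n\cond W_2)$ and $h(\ry^n\cond W_1)$ by the crude factor-two bounds \eqref{eq:tmp822}--\eqref{eq:tmp876} alone, the right-hand side collapses only to $\tfrac52 h(\ry^n)-\tfrac12 h(\rz^n)$, which exceeds $2\,n\log_2 P$ and admits a spurious point such as $(d_1,d_2)=(2/3,1/6)$. The resolution I would pursue is to sharpen the conditional symmetry inequality: instead of discarding the residual term in $2h(\rz^n\cond W_2)\ge h(\ry^n,\rz^n\cond W_2)=h(\ry^n\cond W_2)+h(\rz^n\cond \ry^n,W_2)$, I would retain $h(\rz^n\cond\ry^n,W_2)$, which is precisely the interference-alignment slack, and tie it back to $R_2$ (and dually $h(\ry^n\cond\rz^n,W_1)$ to $R_1$). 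This extra coupling is what pins the bound to the line through $(2/3,0)$ and $(1/2,1/2)$; verifying that the sharpened inequality is simultaneously tight at both corners is the main technical obstacle, after which the converse closes.
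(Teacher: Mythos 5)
There is a genuine gap: you never actually prove the facet $3d_1+d_2\le 2$, which is the entire content of the theorem. Your own text concedes this (``verifying that the sharpened inequality is simultaneously tight at both corners is the main technical obstacle''), and the sketched fix --- retaining the slack $h(Z^n\cond Y^n,W_2)$ and ``tying it back to $R_2$'' --- is never made precise. The idea you are missing is the paper's choice of the $R_2$ bound: it does \emph{not} condition on $W_1$ and does \emph{not} subtract the eavesdropping term, but uses plain Fano,
\begin{equation*}
n(R_2-\taulog)\le I(W_2;Z^n)=h(Z^n)-h(Z^n\cond W_2),
\end{equation*}
so that $h(Z^n\cond W_2)$ enters with a \emph{negative} sign. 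This is exactly what couples, at weights $(3,1)$, with the positive $h(Z^n\cond W_2)$ in the $R_1$ bound
\begin{equation*}
n(R_1-\taulog)\le\min\Big\{h(Z^n\cond W_2),\ h(Y^n)-\tfrac{1}{2}h(Z^n)\Big\},
\end{equation*}
whose second branch is the \emph{unconditioned} inequality \eqref{eq:tmplast} applied to $W_1$ --- not the $W_2$-conditioned branch $h(Y^n\cond W_2)-\tfrac{1}{2}h(Z^n\cond W_2)$ you propose (with your conditioned branch the combination below only yields the constant $7/3$, not $2$). Adding three times the $R_1$ bound to the $R_2$ bound and distributing the minimum, both branches become functions of the single scalar $\alpha\defeq h(Z^n)+2h(Z^n\cond W_2)$:
\begin{align*}
n(3R_1+R_2-\taulog)&\le\min\Big\{\alpha,\ 3h(Y^n)-\frac{\alpha}{2}\Big\}\le\max_{\alpha}\,\min\Big\{\alpha,\ 3h(Y^n)-\frac{\alpha}{2}\Big\}\\
&=2h(Y^n)\le 2n\log_2 P+\taulog,
\end{align*}
the maximum over $\alpha$ being attained at $\alpha=2h(Y^n)$. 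This gives $3d_1+d_2\le 2$ outright; no inequality needs to be simultaneously tight at two corners.

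By contrast, your bound $nR_2\le I(W_2;Z^n\cond W_1)-I(W_2;Y^n\cond W_1)$ introduces entropies conditioned on $W_1$ and on $(W_1,W_2)$ that have no counterpart in the $W_2$-conditioned $R_1$ bound, so nothing cancels at unequal weights --- exactly the dead end you diagnose, where the factor-two bounds leave $\tfrac{5}{2}h(Y^n)-\tfrac{1}{2}h(Z^n)$ and only $3d_1+d_2\le\tfrac{5}{2}$. Your point (i), the equal-weight cancellation giving $d_1+d_2\le 1$, is correct but strictly weaker than the theorem (the two facets imply it). So while your setup --- genie conditioning on the opposite message via $I(W_1;Z^n\cond W_2)\le n\,\taulog$, conditional versions of Lemmas~1 and~2, and symmetry between the two facets --- coincides with the paper's first steps, the decisive device (pairing a secrecy-free Fano bound on $R_2$ with the min-form bound on $R_1$, then the one-parameter max--min evaluation) is absent, and without it the claimed region is not established.
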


\begin{figure}[ht]\label{fig:Region}
\begin{center}
\includegraphics[width=5cm,clip=]{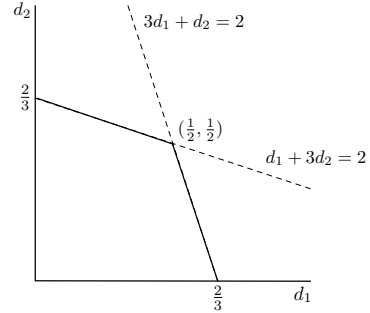}
\caption{The SDoF region of the two-user MISO-BCC.}
\end{center}
\end{figure}

\begin{remark}
Obviously, the above region is included in that of the MISO-BC with delayed CSIT \cite{maddah2010degrees} as well as the rectangle region of the MISO-BCC with perfect CSIT. In Table \ref{table:comparison}, we summarize the achievable SDoF with no, delayed, and perfect CSIT and compare them the achievable DoF of the two-user MISO-BC. We remark that the degradation due to the imperfect CSIT is more significant in the secrecy communications.   
 \end{remark}
\begin{table}[!bhtp]
\caption{Sum SDoF of the MISO-BCC. }
\begin{center}
\begin{tabular}{|c|c|c|c|}
\hline &  no CSIT & delayed CSIT & perfect CSIT\\ \hline\hline
MISO-BCC &  0  & 1 & 2 \\\hline
MISO-BC & 1 & $\frac{4}{3}$ &  2 \\\hline
\end{tabular}
\end{center}
\label{table:comparison}
\vspace{-0.2em}
\end{table}

\begin{figure*}[!ht]
\begin{flalign}
   \xv_1 &= \vu& 
   \xv_2 &=  \vv_1+ [  \hv_1^{\T} \uv\quad 0]^\T &  
   \xv_3 &= \vv_2+ [\gv_1^{\T} \uv\quad 0]^\T&
      \xv_4 &= [  \gv_2 ^\T \vv_1 + \hv_3 \vv_2 + (h_{31} \gv_1^{\T} + g_{21}\hv_1^{\T} )\uv \quad 0]^\T     \label{eq:tmp401}\\  
   \tilde{y}_1 &= \hv_1^\T \uv & 
   \tilde{y}_2 &= \hv_2^{\T}\vv_1 + h_{21}\hv_1^{\T} \uv & 
   \tilde{y}_3 &= \hv_3^\T \vv_2 + h_{31} \gv_1^{\T} \uv &
   \tilde{y}_4 &= h_{41}\left(\gv_2 ^\T \vv_1 + \hv_3 ^\T\vv_2 + (h_{31} \gv_1 + g_{21}\hv_1 )^\T \uv\right) \label{eq:tmp402} \\
   \tilde{z}_1 &= \gv_1^{\T} \uv  & 
   \tilde{z}_2 &= \gv_2 ^\T \vv_1 + g_{21}\hv_1^{\T} \uv  &
   \tilde{z}_3 &= \gv_3 ^\T\vv_2 + g_{31}\gv_1^{\T} \uv &
   \tilde{z}_4 &=g_{41}\left(\gv_2 ^\T \vv_1 + \hv_3^\T \vv_2 + (h_{31} \gv_1+ g_{21}\hv_1)^\T\uv\right) \label{eq:tmp403} 
\end{flalign}
\end{figure*}
\vspace{-0.2em}
\begin{proof}
First, the secrecy constraint (\ref{eq:Constraint1-bcc}) and the Fano inequality for $W_2$ yield 
\begin{equation} 
I(W_1;\rz^n|W_2)\leq n \taulog \label{eq:ubBCC6}.
\end{equation}
Combining \eqref{eq:ubBCC6} with the Fano inequality on $W_1$, we have
\begin{align}
n(R_1- \taulog) &\leq  I(W_1;Y^n|W_2) - I(W_1;Z^n|W_2)\nonumber\\
&\leq  I(W_1;Y^n|Z^n,W_2)\nonumber \\
 &\leq h(Y^n|Z^n,W_2) \label{apply-lemma}\\
&\leq h(Z^n|W_2)
\end{align}
where (\ref{apply-lemma}) follows from inequality \eqref{eq:tmp1} in Lemma 1 and the last inequality follows since removing the conditioning increases the entropy. We notice that the upper bound (\ref{eq:tmplast}) holds true by replacing $W$ with $W_1$. Finally, we obtain the following
upper bound for $R_1$: 
\begin{multline}\label{eq:ubBCC7}
 n(R_1- \taulog) \leq  \min\{ h(Z^n|W_2), h(Y^n)-\frac{1}{2}h(Z^n)\}. 
\end{multline} 

On the other hand, the Fano inequality for $W_2$ leads to
\begin{equation} 
n(R_2- \taulog) \leq  h(\rz^n) - h(\rz^n|W_2), \label{eq:ubBCC8}
\end{equation}
By weight-summing the two inequalities (\ref{eq:ubBCC7}) and (\ref{eq:ubBCC8}), we obtain
\begin{align*}
n(3R_1+R_2- \taulog)
&\leq \max_{h(Y^n)}\max_{\alpha}\min\left\{ \alpha,\
3h(Y^n)-\frac{\alpha}{2} \right\}\\
&\leq \max_{h(Y^n)} 2h(Y^n)\\
& \leq 2n\log_2 P
\end{align*}
where we let $\alpha= h(Z^n)+2h(Z^n|W_2)$ in the first inequality and the last inequality follows because $h(Y^n)\leq n\log_2 P+\taulog$. 
By dividing both sides by $\log_2 P$ and letting $P$ grow, we obtain the first desired inequality. By swapping the roles of $R_1$ and $R_2$, we obtain the second inequality. This completes the proof.  
\end{proof}

It turns out that the outer bound given by Theorem \ref{thm:ubBCC} is the fundamental SDoF region of the two-user MISO-BCC with delayed CSIT on both channels. We next prove that the cross point between two half-spaces is indeed achievable and hence by the simple time-sharing argument all pairs $(d_1,d_2)\in \mathcal{R}_{\textrm{BCC}}$ are achievable.

\subsection{Achieving $(d_1, d_2) = (\frac{1}{2},\frac{1}{2})$}

As an extension of the three-slot scheme for the MISO wiretap channel, 
we propose a four-slot scheme sending six independent
Gaussian-distributed symbols $\vu \defeq [u_1\ u_2]^\T$, $\vv_1 \defeq
[v_{11} \ v_{12}]^\T$, $\vv_2 \defeq [v_{21} \ v_{22}]^\T$ whose powers scale equally with $P$. The channel inputs and outputs are given in \eqref{eq:tmp401}-\eqref{eq:tmp403}, where we let $\tilde{y}_t, \tilde{z}_t$ denote the received signal without thermal noise at receiver 1, 2, respectively. Compared to the three-slot scheme for the MISO wiretap channel,an additional time slot (third slot) is added to convey the message $W_2$ and the last slot is dedicated to send two signals overheard at the unintended receiver simultaneously.  The observations at two receivers can be rewritten as
\begin{align*}\label{obs-delayed2}
\left[\begin{array}{c}
            \tilde{y}_1\\
            \tilde{y}_2\\
            \tilde{y}_3\\
            \tilde{y}_4
          \end{array}\right] = &\left[\begin{array}{ccc}
           \zerov & 1 & 0\\
          \hv_2^{\T}  & h_{21} & 0 \\
          \zerov & 0 & 1 \\
          h_{41}\gv_2^{\T}&  h_{41}g_{21} &  h_{42} 
          \end{array}\right]           
          \left[\begin{array}{c}
          \vv_1\\
          \hv_1^{\T}\uv \\
          h_{31}\gv_1^{\T}\uv + \hv_3^{\T} \vv_2
           \end{array}\right],   \\
\left[\begin{array}{c}
            \tilde{z}_1\\
            \tilde{z}_2\\
            \tilde{z}_3\\
            \tilde{z}_4
          \end{array}\right] = &\left[\begin{array}{ccc}
           \zerov & 1 & 0\\
          \zerov  & 0 & 1 \\
          \gv_3^{\T} & g_{31} &0 \\
          g_{42}\hv_3^{\T}&  g_{42}h_{31} &  g_{41} 
          \end{array}\right]           
          \left[\begin{array}{c}
          \vv_2\\
          \gv_1^{\T}\uv \\
           \gv_2^{\T}\vv_1 + g_{21}\hv_1^{\T}\uv 
           \end{array}\right].              
\end{align*}
We remark that 1) $\vv_2$~(resp.
$\vv_1$) and the artificial noise $\vu$ are aligned in a
two-dimensional subspace at receiver $1$~(resp.~receiver~$2$), while the useful signal
$\vv_1$~(resp. $\vv_2$) also lies within a two-dimensional subspace, 2)
$\vv_2$~(resp. $\vv_1$) is completely drowned in the artificial noise
$\vu$ at receiver~$1$~(resp.~receiver~$1$). It is readily shown that 
\begin{align}
  I(\rvv_1;\ry^4) &= 2 \log_2(P) + \taulog \\
  I(\rvv_1;\rz^4\cond \rvv_2) &= \taulog
\end{align}%
which implies degrees of freedom $d_1 = 1/2$.  By symmetry, we have $d_2 =
1/2$. 


\section{Conclusions}
We studied the impact of delayed CSIT on secrecy degrees of freedom (SDoF) in the Gaussian MISO wiretap channel and the two-user Gaussian MISO broadcast channel with confidential messages (BCC). We fully characterized the corresponding SDoF region when the transmitter has delayed CSI on both channels and proved that simple artificial noise schemes are optimal. The comparison with the achievable DoF of the MISO-BC  demonstrated the sensitivity of the secrecy rate to the quality of CSIT. On one hand, delayed CSIT substantially increases the SDoF (w.r.t. the case of no CSIT where the SDoF is zero). On the other hand,  the lack of perfect CSIT yields a more severe loss in the secrecy communications. 




\section*{Acknowledgment}
This work was partially supported by the framework of the FP7 Network of Excellence in Wireless Communications NEWCOM++.

%

\end{document}